\def\openone{\leavevmode\hbox{\small1\kern-3.8pt\normalsize1}}
\def\11{\mathbb{I}}
\def\sln{\succeq_{\operatorname{l.n.}}}
\def\mc{\succeq_{\operatorname{m.c.}}}
\def\deg{\succeq_{\operatorname{deg}}}
\newtheorem{definition}{Definition}[section]
\newtheorem{proposition}[definition]{Proposition}
\newtheorem{lemma}[definition]{Lemma}
\newtheorem{theorem}[definition]{Theorem}
\newtheorem{observation}[definition]{Observation}
\newtheorem{example}[definition]{Example}
\newtheorem{remark}[definition]{Remark}
\newcommand{\sign}[1]{{\rm sign}(#1)}
\newcommand{\cL}{{\cal L}}
\newcommand{\cY}{{\cal Y}}
\def\ln{\succeq_{\operatorname{l.n.}}}
\numberwithin{equation}{section}
\newcommand{\abs}[1]{\lvert#1\rvert}
\DeclareRobustCommand\openone{\leavevmode\hbox{\small1\normalsize\kern-.33em1}}
\newcommand{\be}{\begin{equation}}
	\newcommand{\ee}{\end{equation}}
\newcommand{\bea}{\begin{eqnarray}}
	\newcommand{\eea}{\end{eqnarray}}
\newcommand{\beas}{\begin{eqnarray*}}
	\newcommand{\eeas}{\end{eqnarray*}}
\DeclareFontFamily{U}{mathx}{\hyphenchar\font45}
\DeclareFontShape{U}{mathx}{m}{n}{<-> mathx10}{}
\DeclareSymbolFont{mathx}{U}{mathx}{m}{n}
\DeclareMathAccent{\widebar}{0}{mathx}{"73}
\newcommand{\cX}{{\cal X}}
\DeclareMathAccent{\widehat}{0}{mathx}{"70}
\DeclareMathAccent{\widecheck}{0}{mathx}{"71}
\begin{document}

\title{Partial orders and contraction for BISO channels}
\author{
\IEEEauthorblockN{Christoph Hirche\textsuperscript{\textsection}}
\IEEEauthorblockA{Institute for Information Processing (tnt/L3S),\\ Leibniz Universit\"at Hannover, Germany}
\and
\IEEEauthorblockN{Oxana Shaya\textsuperscript{\textsection}} 
\IEEEauthorblockA{Institute for Information Processing (tnt/L3S),\\ Leibniz Universit\"at Hannover, Germany}

}

\maketitle
\begingroup\renewcommand\thefootnote{\textsection}
\footnotetext{The order of authors is chosen alphabetically.}
\endgroup

\begin{abstract}
A fundamental question in information theory is to quantify the loss of information under a noisy channel. Partial orders and contraction coefficients are typical tools to that end, however, they are often also challenging to evaluate. For the special class of binary input symmetric output (BISO) channels, Geng et al. showed that among channels with the same capacity, the binary symmetric channel (BSC) and binary erasure channel (BEC) are extremal with respect to the more capable order. Here, we show two main results. First, for channels with the same KL contraction coefficient, the same holds with respect to the less noisy order. Second, for channels with the same Dobrushin coefficient, or equiv. maximum leakage or Doeblin coefficient, the same holds with respect to the degradability order. In the process, we provide a closed-form expression for the contraction coefficients of BISO channels. We also discuss the comparability of BISO channels and extensions to binary channels in general.
\end{abstract}

\section{Introduction}\label{sec:intro}
Data processing inequalities state that information measures are monotone under the application of a noisy channel, i.e. information can only decrease along the transmission, qualifying them as distinguishability measures~\cite{polyanskiy2016}. Contraction coefficients quantify by how much the measure decreases after the application of the channel and hence provide a strengthening of the data processing inequality. A closely related concept are channel partial orders, formalizing the idea that one channel can be generally more suitable to transmit information by some measure. An example of this connection was given by Polyanskiy et al.~\cite[Proposition 15]{polyanskiy2016} who characterized the Kullback–Leibler (KL) contraction coefficient via the maximal erasure probability $\beta$ such that the erasure channel dominates the given channel with respect to the less noisy partial order~\cite{korner1975comparison},
\begin{align}\label{Eq:KL-LN-con}
    \beta(P) &= \sup\{\varepsilon : BEC(\varepsilon) \ln P_{Y|X}\} \\
      \eta_{KL}(P) &= 1- \beta(P)
\end{align}
For technical definitions, we refer to the next section. 
Using instead the degradation partial order~\cite{bergmans1973random} in the right-hand side of Equation~\eqref{Eq:KL-LN-con}, leads to the definition of Doeblin coefficents~\cite{doeblin1937proprietes}. 
Yet another partial order is the more capable order~\cite{korner1975comparison}. 
Geng et al.~\cite{geng2013broadcast} established that among all binary input symmetric output (BISO) channels with the same capacity the binary erasure channel (BEC) and the binary symmetric channel (BSC) form the two extremes with respect to the more capable partial order. This leads to their main result that the inner and outer bounds for the corresponding BISO broadcast channel differ if and only if the two BISO channels are more capable comparable. 
In a similar line of work, \cite{Makur_2018} derived a criterion for the less noisy domination by a symmetric channel which in turn implies a logarithmic Sobolev inequality for the original channel. Beyond those results, partial orders and contraction coefficients have found numerous further applications~\cite{asoodeh2020privacy,hirche2022bounding,xu2016information}.  
This work identifies the BSC and BEC as extremal for BISO channels of the same contraction coefficient, resp. Doeblin coefficient w.r.t. the less noisy resp. more capable partial order and also comments on general binary channels.

\section{Preliminaries}\label{sec:pre}

In this work, we mainly consider binary input symmetric output (BISO) channels with input alphabet $\cX=\{0,1\}$ and output alphabet $\cY=\{0,\pm 1, \pm 2,\dots,\pm l\}$ for some integer $l\geq1$, those are the channels for which $P_{Y|X}(y|0)=P_{Y|X}(-y|1)\coloneqq p_y$. We can always assume that the output alphabet $\cY$ has even number of elements because we can split $Y=0$ into two outputs, $Y = 0_+$  and $Y= 0_-$, with $P_{Y|X}(0_-|0) = P_{Y|X}(0_+|0) = \frac{p_0}{2}$. 
The binary convolution is denoted by $a\ast b=a(1-b)+(1-a)b$ and the binary entropy function by $h_2(p)= - p\log_2(p)-(1-p) \log_2(1-p)$. We denote the  Bernoulli distribution with probability $P(X=0) = p $ by $Ber(p)$.
We are interested in ordering channels for which we employ the following partial orders. We say $P_{Y|X}$ is \textit{more capable} than $Q_{Y'|X}$, denoted $P \mc Q$, if for all $P_X$, 
\begin{align}
    I(X:Y) \geq I(X:Y'). 
\end{align}
Here, $I(X:Y)$ denotes the mutual information. 
$P_{Y|X}$ is \textit{less noisy} than $Q_{Y'|X}$, denoted $P \ln Q$, if for all $P_{UX}$, s.t. $U-X-Y$ (resp. $U-X-Y'$), 
\begin{align}
    I(U:Y) \geq I(U:Y'). 
\end{align}
Finally, $P_{Y|X}$ is \textit{degradable} into $Q_{Y'|X}$, denoted $P \deg Q$, if there exists a channel $D_{Y'|Y}$ such that,
\begin{align}
    Q_{Y'|X} = D_{Y'|Y} \circ P_{Y|X}. 
\end{align}
Equivalently, the degradable partial order is characterised via the conditional min-entropy \cite{buscemi2017},
\begin{align}\label{eq:deg_min_entropy}
   \forall P_{UX},\; H_{\min}(U|Y) \leq H_{\min}(U|Y'). 
\end{align}
We will further group the BISO channels into classes for which we need some information quantifiers. The capacity of a channel $P_{Y|X}$ is given by 
\begin{align}
    C(P_{Y|X}) = \sup_{P_X} I(X:Y). 
\end{align}
The KL contraction coefficient and the Dobrushin coefficient are respectively defined by, 
\begin{align}
    \eta_{KL}(P) &= \sup_{P_X,Q_X} \frac{D(P\circ P_X\|P\circ Q_X)}{D(P_X\|Q_X)}, \\
    \eta_{TV}(P) &= \sup_{P_X,Q_X} \frac{TV(P\circ P_X,P\circ Q_X)}{TV(P_X,Q_X)}, 
\end{align}
where we denote the distribution on $Y$ from applying $P$ to $P_X$ resp. $Q_X$ by  $P\circ P_X$ resp. $P \circ Q_X$. We will use frequently that $\eta_{KL}(P)=\eta_{\chi^2}(P)$, where the latter is the contraction coefficient for the $\chi^2$-divergence, see~\cite{choi1994equivalence}. 

If we replace the supremum with an infimum we call them expansion coefficients and denote them with $\widecheck\eta$. They quantify the preservation of information after the application of a noisy channel. 
The Doeblin coefficient 
\begin{align}
     \alpha( P_{Y|X}) =\sum_{y \in \mathcal{Y}} \min_{x} P_{Y|X}(y|x)
\end{align}
is also characterised by the extremal erasure probability such that the given channel is a degraded version of an erasure channel,
\begin{align}\label{eq:Doeblin_E}
      \alpha( P_{Y|X}) = \sup\{\varepsilon : BEC(\varepsilon) \deg P_{Y|X}\}.
\end{align}
Maximal leakage~\cite{issa2019operational} is the logarithm of the max-Doeblin coefficient introduced in \cite{Makur_2024},
\begin{align}
    \cL(X\rightarrow Y) &= \log \alpha_{\max}(P), \\
    \alpha_{\max}(P) &= \sum_{y \in \mathcal{Y}} \max_{x} P_{Y|X}(y|x).
\end{align}
Here, we slightly deviate from the usual definition of maximal leakage by making it a quantity of the channel alone assuming that the input has full support. 
In this work, we are mostly focused on binary input channels. For those, many of the above quantities are directly related to each other. 
\begin{observation}\label{ob:bin}
   For any binary channel $P_{Y|X}$, we have,
   \begin{align}
       \eta_{TV}(P) &= TV(P_{Y|X=0}, P_{Y|X=1})  = 1-\alpha(P)\\
       &= \alpha_{\max}(P)-1 =  e^{\cL(X\rightarrow Y)}-1 =\widecheck\eta_{TV}(P) .        
   \end{align}
\end{observation}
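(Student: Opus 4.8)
The plan is to prove the whole chain of equalities as a short sequence of elementary identities, using only that a binary-input channel acts affinely on the one-parameter family of Bernoulli inputs together with the standard variational formulas for total variation. Throughout I abbreviate $\mu_0 = P_{Y|X=0}$ and $\mu_1 = P_{Y|X=1}$.

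For the two contraction coefficients, I would fix $P_X = Ber(p)$ and $Q_X = Ber(q)$ with $p \neq q$ (every input distribution on $\{0,1\}$ has this form). Since $P_X \mapsto P\circ P_X$ is affine, $(P\circ P_X)(y) - (P\circ Q_X)(y) = (p-q)\bigl(\mu_0(y) - \mu_1(y)\bigr)$ for every $y$, while $TV(P_X, Q_X) = |p-q|$. Summing absolute values over $y$ and dividing by two in both numerator and denominator, the ratio $TV(P\circ P_X, P\circ Q_X)/TV(P_X,Q_X)$ equals the constant $TV(\mu_0,\mu_1)$, independently of $p$ and $q$. Hence the supremum and the infimum over $p \neq q$ both equal $TV(\mu_0,\mu_1)$, giving $\eta_{TV}(P) = \widecheck\eta_{TV}(P) = TV(\mu_0,\mu_1)$; this is the binary-input special case of Dobrushin's identity $\eta_{TV}(P) = \max_{x,x'} TV(P_{Y|X=x},P_{Y|X=x'})$.

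For the Doeblin and max-Doeblin coefficients, I would use $TV(\mu_0,\mu_1) = 1 - \sum_y \min(\mu_0(y),\mu_1(y))$ and $\sum_y(\mu_0(y)+\mu_1(y)) = 2$. For binary input, $\alpha(P) = \sum_y \min_x P_{Y|X}(y|x) = \sum_y \min(\mu_0(y),\mu_1(y))$, so $TV(\mu_0,\mu_1) = 1 - \alpha(P)$. Similarly $\alpha_{\max}(P) = \sum_y \max(\mu_0(y),\mu_1(y))$, and $\min(a,b)+\max(a,b)=a+b$ yields $\alpha(P) + \alpha_{\max}(P) = 2$, hence $\alpha_{\max}(P) - 1 = 1 - \alpha(P) = TV(\mu_0,\mu_1)$. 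The last equality $e^{\cL(X\to Y)} - 1 = \alpha_{\max}(P) - 1$ is immediate from the definition $\cL(X\to Y) = \log\alpha_{\max}(P)$.

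There is no substantial obstacle here; every link in the chain is a one-line computation. The only place that needs a moment's care is the first step: one must observe that over Bernoulli inputs the TV-contraction ratio is not merely bounded by $TV(\mu_0,\mu_1)$ but exactly equal to it, which is precisely what forces the expansion coefficient $\widecheck\eta_{TV}$ to coincide with the contraction coefficient $\eta_{TV}$. Everything else is bookkeeping with $\min$, $\max$, and the definition of total variation.
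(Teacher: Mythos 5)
Your proof is correct and follows essentially the same route as the paper: the affinity characterization $TV(P,Q)=1-\sum_y\min\{P(y),Q(y)\}=\sum_y\max\{P(y),Q(y)\}-1$ plus the definitions handle the middle equalities, and the two contraction/expansion identities come from the two-point structure of binary inputs. The only cosmetic difference is that the paper invokes Dobrushin's representation $\eta_{TV}(P)=\sup_{x,x'}TV(P_{Y|X=x},P_{Y|X=x'})$ as a known fact, whereas you rederive it in the binary case by observing that the ratio is constant over all pairs of Bernoulli inputs --- which has the small bonus of yielding $\eta_{TV}=\widecheck\eta_{TV}$ in the same stroke.
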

\begin{proof}
The first equality follows from the equivalent representation of the Dobrushin coefficient that is $\eta_{TV}(P) = \sup_{x,x'} TV(P_{Y|X=x},P_{Y|X=x'})$.
Most of the remaining equalities follow either, as noted in \cite{Makur_2024}, from the affinity characterisation of the TV distance $TV(P,Q) =1 - \sum_y \min\{P(y), Q(y)\}= \sum_y \max\{P(y), Q(y)\} - 1 $, or by definition. The final equality is a direct calculation. 
\end{proof}
Specializing further on the aforementioned BISO channels, also the KL contraction coefficient simplifies and we can give a closed-form expression. 
\begin{lemma}\label{lem:etaKL}
   For any BISO channel $P_{Y|X}$  we have
   \begin{align}
       \eta_{KL}(P) &= \sum_{y>0} \frac{(p_y-p_{-y})^2}{p_y+p_{-y}}
   \end{align}
\end{lemma}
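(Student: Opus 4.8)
The plan is to pass to the $\chi^2$-contraction coefficient via the identity $\eta_{KL}(P)=\eta_{\chi^2}(P)$ recalled above, and to exploit that for a binary input the supremum defining $\eta_{\chi^2}$ collapses to a one-dimensional optimization over the reference input, which is then solved by the uniform distribution.

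\textbf{Reducing the optimization.} Writing the two inputs as $P_X=Ber(p)$ and $Q_X=Ber(q)$, the $\chi^2$-divergence $D_{\chi^2}(\cdot\,\|\,\cdot)$ satisfies $D_{\chi^2}(Ber(p)\,\|\,Ber(q))=(p-q)^2/(q(1-q))$, while the output distributions are $(P\circ Ber(p))(y)=p\,p_y+(1-p)\,p_{-y}$ and likewise for $q$, so that
\begin{align}
  D_{\chi^2}(P\circ P_X\,\|\,P\circ Q_X)=(p-q)^2\sum_{y}\frac{(p_y-p_{-y})^2}{q\,p_y+(1-q)\,p_{-y}}.
\end{align}
Hence the ratio in the definition of $\eta_{\chi^2}$ equals $\psi(q)\coloneqq q(1-q)\sum_{y}(p_y-p_{-y})^2/(q\,p_y+(1-q)\,p_{-y})$, which crucially does not depend on $p$. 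Taking the supremum over all inputs---it suffices to let the reference $Q_X=Ber(q)$ range over full-support distributions, $q\in(0,1)$, since a degenerate $Q_X$ forces $P_X=Q_X$---yields $\eta_{KL}(P)=\eta_{\chi^2}(P)=\sup_{q\in[0,1]}\psi(q)$, where $\psi$ extends continuously with $\psi(0)=\psi(1)=0$.

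\textbf{The uniform input is optimal.} I would then show this supremum is attained at $q=\tfrac12$ via concavity and symmetry. Each summand of $\psi$ is a nonnegative multiple of $t(1-t)/(b+(a-b)t)$ with $t=q$, $a=p_y$, $b=p_{-y}\ge0$, and a short computation gives its second derivative as $-2ab/(b+(a-b)t)^3\le0$, so $\psi$ is concave on $[0,1]$. Moreover the BISO property---swapping the two inputs and negating the output leaves $P_{Y|X}$ unchanged, and neither operation affects the contraction ratio except for relabeling $q\mapsto 1-q$---gives $\psi(q)=\psi(1-q)$; this can also be checked directly by matching the $y$-th term of $\psi(1-q)$ with the $(-y)$-th term of $\psi(q)$. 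A concave function symmetric about $\tfrac12$ is maximized there.

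\textbf{Evaluating.} Finally, $\psi(\tfrac12)=\tfrac14\sum_{y}(p_y-p_{-y})^2/(\tfrac12(p_y+p_{-y}))=\tfrac12\sum_{y\neq0}(p_y-p_{-y})^2/(p_y+p_{-y})$, where the $y=0$ term vanishes because $p_0=p_{-0}$; since the summand is invariant under $y\mapsto -y$ this equals $\sum_{y>0}(p_y-p_{-y})^2/(p_y+p_{-y})$, proving the claim. I expect the only non-mechanical step to be the reduction to the uniform input; alternatively one could obtain the same family $\psi(q)$ from the singular-value characterization of $\eta_{\chi^2}$ restricted to mean-zero functions of the binary variable, but the direct two-Bernoulli computation seems shortest.
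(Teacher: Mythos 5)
Your proof is correct and follows essentially the same route as the paper: pass to $\eta_{\chi^2}$, observe that for binary inputs the ratio depends only on the reference parameter $q$ through the function $\psi(q)=q(1-q)\sum_y (p_y-p_{-y})^2/(qp_y+(1-q)p_{-y})$, and show the supremum is attained at $q=\tfrac12$. The only (minor, and arguably cleaner) difference is how that last step is justified --- you use concavity of $\psi$ plus the symmetry $\psi(q)=\psi(1-q)$, whereas the paper shows each summand is individually maximized at $q=\tfrac12$ and then exchanges the supremum with the sum.
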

\begin{proof}
We consider the binary input distributions $P_X = Ber(p) $ and $Q_X = Ber(q)$.
Note that for any binary channel, by direct calculation, the contraction coefficient simplifies to 
\begin{align} \label{eq:eta_bin}
    \eta_{KL}(P) = \eta_{\chi^2}(P) 
    =
    \sup_{q} \sum_{y} \frac{(p_{y|x=0}-p_{y|x=1})^2(1-q)q}{qp_{y|x=0} +(1-q)p_{y|x=1}}.
\end{align}
For a BISO channel, this evaluates to 
        \begin{align}
      &\sup_{q}\sum_y (p_y-p_{-y})^2 \left( 
    \frac{q(1-q)}{qp_y+(1-q)p_{-y}}
    \right) \label{eq:unoptimised_contraction}
    \\&=
    \sup_{q}\sum_y \frac{(p_y-p_{-y})^2}{p_y+p_{-y}} \frac{q(1-q)}{q \ast \frac{p_{-y}}{p_y+p_{-y}}} \\&=
   \sup_{q} \sum_{y>0} \frac{(p_y-p_{-y})^2}{p_y+p_{-y}} 
   \frac{q(1-q) }{q \ast \frac{p_y}{p_y+p_{-y}} (1- q \ast \frac{p_y}{p_y+p_{-y}}) }. 
    \end{align}
    We can exchange the supremum and the sum if every summand is optimized by the same $q$. 
    Now we show that $f(q) =  \frac{q(1-q) }{q \ast \frac{p_y}{p_y+p_{-y}} (1- q \ast \frac{p_y}{p_y+p_{-y}}) }$ has its maximum at $q= \frac{1}{2}$ for any $p_y$. Let us denote $\delta_y  = \frac{p_y}{p_y+p_{-y}}$.
\begin{align}
    &\frac{df(q)}{dq}  = \frac{ - \delta_{-y} \delta_y (2 q-1)}{(q \ast \delta_y )^2 (q\ast \delta_{-y})^2}  \overset{!}{=} 0 \\
    \Leftrightarrow\, &(\delta_y \in \{0,1\} \land q \not\in \{0,1\}) \lor q = \frac{1}{2}.\\
     \frac{d^2f(q)}{dq^2}  &= -2\delta_{-y}\delta_y\left( \frac{1}{(q \ast \delta_y)^3} + \frac{1}{(q \ast \delta_{-y})^3} \right), \\
     \frac{d^2f(\tfrac12)}{dq^2}  &= - 32 \delta_{-y}  \delta_y \leq 0.
\end{align}
Hence $q= \frac{1}{2}$ is a maximum and the result follows.
\end{proof}

As we are interested in extrema among channels with common properties, we define the following classes: 
\begin{align}
    C_C &= \{ P_{Y|X}  \mid C(P) = c \}, \\
    C_\eta &= \{ P_{Y|X}  \mid \eta_{KL}(P) = c \}, \\
    C_\alpha &= \{ P_{Y|X}  \mid \eta_{TV}(P) = c \}.
\end{align}
We are now ready to discuss our main results in the following section. 
\section{Main Results}

Let $F(C)$ denote an arbitrary BISO channel belonging to the class $C$. Abusing notation, we denote by $BSC(C)$ and $BEC(C)$ the binary symmetric channel
and the binary erasure channel belonging to the respective class.

\begin{theorem}\label{th:eta}
    For any BISO channel $F(C_\eta)$, we have
    \begin{align}
        BEC(C_\eta) \ln F(C_\eta) \ln BSC(C_\eta). 
    \end{align}
\end{theorem}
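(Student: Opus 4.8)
The plan is to reduce both comparisons to the divergence form of the less‑noisy order and then use the fact that a BISO channel is a mixture of binary symmetric channels. The first step is to recall the characterization $P\ln Q \iff D(P\circ\mu\,\|\,P\circ\nu)\ge D(Q\circ\mu\,\|\,Q\circ\nu)$ for all input distributions $\mu,\nu$: one direction follows by taking $U$ in the definition of $\ln$ to select between two inputs, the other from the identity $I(U:Y)=\mathbb{E}_u\,D(P\circ P_{X|U=u}\,\|\,P\circ P_X)$ applied term by term. Since $\cX=\{0,1\}$ we take $\mu=Ber(a)$, $\nu=Ber(b)$ and abbreviate $d(x\|y):=D(Ber(x)\|Ber(y))$. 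Grouping the outputs of $F$ into pairs $\{y,-y\}$, the label $Z:=|Y|$ has input‑independent law $\lambda_y:=p_y+p_{-y}$, and conditioned on $Z=y$ the channel is $BSC(\delta_y)$ with $\delta_y:=\min(p_y,p_{-y})/\lambda_y\in[0,\tfrac12]$; since $F\circ\mu$ and $F\circ\nu$ share the $Z$‑marginal $\lambda$, one gets $D(F\circ\mu\,\|\,F\circ\nu)=\sum_{y>0}\lambda_y\, d(a\ast\delta_y\,\|\,b\ast\delta_y)$, while $D(BSC(\delta)\circ\mu\,\|\,BSC(\delta)\circ\nu)=d(a\ast\delta\,\|\,b\ast\delta)$. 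Using $\eta_{KL}(BSC(\delta))=(1-2\delta)^2$ together with the closed form for $\eta_{KL}$ of BISO channels established above, the relevant members of $C_\eta$ are $BEC(C_\eta)=BEC(1-c)$ and $BSC(C_\eta)=BSC\!\big(\tfrac{1-\sqrt c}{2}\big)$, and for the latter $\sum_{y>0}\lambda_y(1-2\delta_y)^2 = c = (1-2\delta)^2$.

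With this in hand the left inequality is easy. Since $D(BEC(1-c)\circ\mu\,\|\,BEC(1-c)\circ\nu)=c\,d(a\|b)$, the condition $BEC(1-c)\ln F$ is equivalent to $\sum_{y>0}\lambda_y\, d(a\ast\delta_y\,\|\,b\ast\delta_y)\le c\,d(a\|b)$, which holds summand‑by‑summand since $d(a\ast\delta_y\|b\ast\delta_y)\le\eta_{KL}(BSC(\delta_y))\,d(a\|b)=(1-2\delta_y)^2\,d(a\|b)$ by definition of the contraction coefficient and the closed form above (one also gets this directly from \eqref{Eq:KL-LN-con}, as $\beta(F)=1-\eta_{KL}(F)=1-c$ and the supremum there is attained). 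The right inequality $F\ln BSC(C_\eta)$ is the substantive one: after the reduction it asks to show $\sum_{y>0}\lambda_y\, d(a\ast\delta_y\,\|\,b\ast\delta_y)\ge d(a\ast\delta\,\|\,b\ast\delta)$ under the constraint $\sum_{y>0}\lambda_y(1-2\delta_y)^2=(1-2\delta)^2$. Writing $s=1-2\delta\in[0,1]$ and $\alpha=2a-1$, $\beta=2b-1\in[-1,1]$, so that $a\ast\delta=\tfrac{1+\alpha s}{2}$, this becomes Jensen's inequality for the average over $y$ of $t_y:=(1-2\delta_y)^2$, provided that $\Psi_{\alpha,\beta}(t):=d\!\left(\tfrac{1+\alpha\sqrt{t}}{2}\,\middle\|\,\tfrac{1+\beta\sqrt{t}}{2}\right)$ is convex in $t\in[0,1]$ for all $\alpha,\beta\in[-1,1]$ — it is precisely because the averaging variable is $(1-2\delta_y)^2$ that matching $\eta_{KL}$ is the right normalization.

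The convexity of $\Psi_{\alpha,\beta}$ is the main obstacle, and I expect to prove it as follows. Differentiating $\phi(s):=\Psi_{\alpha,\beta}(s^2)$ and simplifying yields $\phi'(s)=\alpha\big(\operatorname{arctanh}(\alpha s)-\operatorname{arctanh}(\beta s)\big)+\tfrac{\beta(\beta-\alpha)s}{1-\beta^2 s^2}$; convexity of $\Psi_{\alpha,\beta}$ on $[0,1]$ is equivalent to $\phi'(s)/s$ being nondecreasing on $(0,1)$. Substituting $\operatorname{arctanh}(x)=\sum_{k\ge0}x^{2k+1}/(2k+1)$ and $(1-\beta^2 s^2)^{-1}=\sum_{k\ge0}\beta^{2k}s^{2k}$ gives $\phi'(s)/s=\sum_{k\ge0}d_k s^{2k}$ with $(2k+1)\,d_k=\alpha^{2k+2}+(2k+1)\beta^{2k+2}-(2k+2)\alpha\beta^{2k+1}$, and each $d_k\ge0$ by the weighted AM–GM inequality applied to the nonnegative numbers $\alpha^{2k+2}$ and $\beta^{2k+2}$ with weights $\tfrac1{2k+2}$ and $\tfrac{2k+1}{2k+2}$, using $|\alpha|\,|\beta|^{2k+1}\ge\alpha\beta^{2k+1}$. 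Hence $\phi'/s$ is nondecreasing, $\Psi_{\alpha,\beta}$ is convex, and the theorem follows. The only delicate points are the endpoint $s=1$ when $|\alpha|$ or $|\beta|$ equals $1$ and the possibility that $d(\cdot\|\cdot)=+\infty$; both are harmless, since $\Psi_{\alpha,\beta}\ge0$ and Jensen's inequality holds for extended‑real‑valued convex functions, the inequality being trivial whenever its left side is infinite.
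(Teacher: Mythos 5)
Your proof is correct, but it takes a genuinely different route from the paper's. The paper proves $F\ln BSC(C_\eta)$ via the $\chi^2$-based criterion of Makur--Polyanskiy (convexity of $P_X\mapsto\chi^2(W\circ P_X\|W\circ Q_X)-\chi^2(V\circ P_X\|V\circ Q_X)$), computes the second derivative explicitly, and reduces the claim to convexity of the rational function $x\mapsto x/\bigl(q(1-q)+\tfrac14(1-2q)^2(1-x)\bigr)$, which is a one-line check. You instead work with the KL-divergence characterization of the less noisy order directly, decompose the BISO channel as a $\lambda$-mixture of $BSC(\delta_y)$'s, and reduce the claim to convexity of $t\mapsto d\bigl(\tfrac{1+\alpha\sqrt{t}}{2}\,\|\,\tfrac{1+\beta\sqrt{t}}{2}\bigr)$, which you establish by a power-series expansion of $\phi'(s)/s$ and a weighted AM--GM bound on each coefficient (I checked the expansion $(2k+1)d_k=\alpha^{2k+2}+(2k+1)\beta^{2k+2}-(2k+2)\alpha\beta^{2k+1}$ and the sign argument; both are right, and the identity $\sum_{y>0}\lambda_y(1-2\delta_y)^2=\eta_{KL}(F)$ correctly ties your normalization to the paper's closed form). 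Structurally the two proofs are closer than they first appear: both ultimately apply Jensen's inequality over the pairs $\{y,-y\}$ with averaging variable $(1-2\delta_y)^2$, and differ only in which per-pair functional must be shown convex. The paper's choice ($\chi^2$) makes that convexity trivial at the cost of importing a nonstandard equivalence; your choice (KL) needs only the classical divergence characterization of $\ln$ and yields a self-contained argument, but the convexity lemma is substantially more work. Your BEC direction is also fine (and slightly more explicit than the paper's, which just invokes Equation~\eqref{Eq:KL-LN-con}); the only things I would insist you state as cited facts rather than asides are the divergence characterization of the less noisy order and $\eta_{KL}(BSC(\delta))=(1-2\delta)^2$.
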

\begin{proof}
By Equation~\eqref{Eq:KL-LN-con} we directly have $ BEC_\varepsilon \ln P $.

To show $F(C_\eta) \ln BSC(C_\eta)$
we use~\cite[Proposition 8]{Makur_2018}, which states that 
 $W \sln V $ if and only if
\begin{align}
    F(P_X)=\chi^2(W\circ P_X||W\circ Q_X) -\chi^2(V\circ P_X||V\circ Q_X)  \nonumber
\end{align}
is convex in $P_X$ for all $Q_X$.
Taking the second derivative, we get,
\begin{align}\label{eq:convex_criterion}
    \frac12 F''(p) &= \sum_{y>0} (w_y+w_{-y}) \frac{\frac{(w_y - w_{-y})^2}{(w_y + w_{-y})^2}}{q\ast\frac{w_y}{w_y+w_{-y}}(1-q\ast\frac{w_y}{w_y+w_{-y}})} \nonumber\\&- \sum_{y>0} (v_y+v_{-y}) \frac{\frac{(v_y - v_{-y})^2}{(v_y + v_{-y})^2}}{q\ast\frac{v_y}{v_y+v_{-y}}(1-q\ast\frac{v_y}{v_y+v_{-y}})}
\end{align}
We want to choose $V$ as a BSC. Hence, the above simplifies to, 
\begin{align}
    \frac12 F''(p) &= \sum_{y>0} (w_y+w_{-y}) \frac{\frac{(w_y - w_{-y})^2}{(w_y + w_{-y})^2}}{q\ast\frac{w_y}{w_y+w_{-y}}(1-q\ast\frac{w_y}{w_y+w_{-y}})} \nonumber\\&-  \frac{(1-2p)^2}{q\ast p (1-q\ast p)}
\end{align}
More specifically, we want to choose $p$ such that the BSC has the same contraction coefficient as $W$. The contraction coefficient of the BSC is, 
\begin{align}
    \eta_{\chi^2}(BSC(p)) = (1-2p)^2. 
\end{align}
Hence $p=\frac12(1\pm \sqrt{\eta})$ and $p(1-p)=\frac14(1-\eta)$ where $\eta\equiv \eta_{\chi^2}(W)$. Furthermore, we use that
\begin{align}
    a\ast b (1-a\ast b) = b(1-b) + (1-2b)^2 a(1-a). 
\end{align}
With that, we can write $\frac12 F''(p)$ as
\begin{align}
   & \sum_{y>0} (w_y+w_{-y}) \frac{\frac{(w_y - w_{-y})^2}{(w_y + w_{-y})^2}}{q(1-q)+\frac14(1-2q)^2\left(1-\frac{(w_y - w_{-y})^2}{(w_y + w_{-y})^2}\right)} \nonumber\\&-  \frac{\eta}{q(1-q)+\frac14(1-2q)^2(1-\eta)}
\end{align}
From the closed formula of the contraction coefficient, it is now easy to see that $F''(p)\geq0$ if the function
\begin{align}
    f(x)=\frac{x}{q(1-q)+\frac14(1-2q)^2\left(1-x\right)}
\end{align}
is convex. 
For that, we check,
\begin{align}
    f''(x) = \frac{8(1-2q)^2}{\left(1-(1-2q)^2 x \right)^3}, 
\end{align}
which can be seen to be non-negative because $q,x\in[0,1]$.
\end{proof}

In addition, using the same criterion as above, we see that BISO channels in $C_\eta$ of output dimension at most three are less noisy comparable.  
\begin{lemma}\label{lemma:lncomp}
    Any two BISO channels $F(C_\eta), G(C_\eta)$ of output dimension at most three are less noisy comparable
\end{lemma}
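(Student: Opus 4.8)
The plan is to reduce such channels to a two-parameter family and then reuse the computation already carried out in the proof of Theorem~\ref{th:eta}. By the symmetry $p_y=P(y|0)=P(-y|1)$, a BISO channel of output dimension at most three has at most one nontrivial output pair $\{+1,-1\}$ together with a possible fixed symbol $0$; write $s=p_1+p_{-1}$ for the total non-erasure probability and $\delta=p_1/(p_1+p_{-1})$, so that the channel is fully described by $(s,\delta)$ and, by the closed form for $\eta_{KL}$ established above, $\eta_{KL}=s(1-2\delta)^2$. Let $W=(s,\delta)$ and $V=(s',\delta')$ be two such channels with $\eta_{KL}(W)=\eta_{KL}(V)=\eta$. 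I would first dispose of the degenerate cases: if $\eta=0$ both channels are completely noisy and trivially comparable; and if $\delta\in\{0,1\}$ (equivalently $s=\eta$), then $W$ is a binary erasure channel $BEC(1-\eta)$ and, since \eqref{Eq:KL-LN-con} gives $\beta(V)=1-\eta_{KL}(V)=1-\eta$, we get $W\sln V$ at once; symmetrically if $\delta'\in\{0,1\}$. So henceforth I may assume $\eta>0$ and $s,s'>\eta$.

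For the main case I would apply the convexity criterion of~\cite[Proposition~8]{Makur_2018} exactly as in the proof of Theorem~\ref{th:eta}: $W\sln V$ holds iff $F''(p)\ge 0$ for every $Q_X=Ber(q)$, where $\tfrac12 F''(p)$ is given by~\eqref{eq:convex_criterion}. Since each channel contributes only the single pair $\{+1,-1\}$, both sums in~\eqref{eq:convex_criterion} collapse to one term; using $a\ast b(1-a\ast b)=b(1-b)+(1-2b)^2a(1-a)$ together with $(1-2\delta)^2=\eta/s$, the same algebra as in the theorem's proof yields
\[
  \tfrac12 F''(p)=\frac{4s\eta}{\,s-\eta(1-2q)^2\,}-\frac{4s'\eta}{\,s'-\eta(1-2q)^2\,},
\]
and both denominators are strictly positive because $s,s'>\eta\ge\eta(1-2q)^2$ for $q\in[0,1]$.

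It then remains only to read off the sign. Setting $t=(1-2q)^2\in[0,1]$, the right-hand side equals
\[
  \frac{4\eta^2\,t\,(s'-s)}{\bigl(s-\eta t\bigr)\bigl(s'-\eta t\bigr)},
\]
which has the same sign as $s'-s$ for every $q$. Hence if $s\le s'$ then $F$ is convex and $W\sln V$, while if $s\ge s'$ then $V\sln W$; in either case the two channels are less noisy comparable. The only delicate point is the possible vanishing of a denominator, but that forces the channel in question to be a BEC, which is covered by the separate argument above; apart from that the proof is a direct computation, and the crucial structural fact is simply that the $q$-dependence factors out of $\tfrac12 F''(p)$, so that its sign is uniform in $q$.
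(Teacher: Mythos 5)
Your proof is correct and follows essentially the same route as the paper: reduce the Makur--Polyanskiy convexity criterion to a single summand for output dimension three and observe that the sign of $F''$ is uniform in $q$ and determined by comparing one scalar per channel (your $s$ is a monotone reparameterization, via $\delta(1-\delta)=\tfrac14(1-\eta/s)$, of the paper's quantity $p_1p_{-1}/(1-p_0)^2$). Your explicit treatment of the degenerate cases ($\eta=0$, the BEC endpoint where the denominator degenerates, and output dimension two) is a welcome bit of extra care but not a different method.
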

\begin{proof}
    As BISO channels with output dimension two are exactly BSC, if either channel has output dimension two, the statement follows from Theorem \ref{th:eta}. In the case that the output dimension is three, we denote the transition probabilities of $F(C_\eta)$  resp. $G(C_\eta)$ by $\{p_0, p_{\pm 1}\}$ resp. $\{w_0, w_{\pm 1}\}$. The sum in Equation~\eqref{eq:convex_criterion} then reduces to only one term and we can cancel out $\eta$ on both sides. $  G(C_\eta)  \ln F(C_\eta) $ is equivalent to 
\begin{align}
    &q \ast \frac{p_1}{1-p_0}\left(q \ast \frac{p_{-1}}{1-p_0}\right) \geq q \ast \frac{w_1}{1-w_0}\left(q \ast \frac{w_{-1}}{1-w_0}\right)\\ 
  &\Leftrightarrow 
    \frac{w_1w_{-1}}{(1-w_0)^2} \geq \frac{p_1p_{-1}}{(1-p_0)^2}.
\end{align}
Hence, the less noisy comparability amounts to comparing two positive numbers.
\end{proof}

\begin{remark}\label{rem:lnCounter}
   For output dimension four, we can find a counterexample. The two channels with transition probabilities $\{p_y\} =\{0.01, 0.48, 0.32, 0.19\}$, respectively $ \{q_y\} = \{0.3- \frac{17}{997},\frac{17}{997}, 0, 0.7 \}$, satisfy both $\eta = 0.194 $. Yet, for $q = 0.001$, $F''(p) \approx -14.4 <0$ while at $q= 0.02$, $F''(p) \approx0.9 >0$.
\end{remark}

Now we consider the extremes in $C_\alpha$ with respect to the degradable partial order. 
\begin{theorem}\label{th:alpha}
    For any BISO channel $F(C_\alpha)$, we have
    \begin{align}
        BEC(C_\alpha) \deg F(C_\alpha) \deg BSC(C_\alpha). 
    \end{align}
\end{theorem}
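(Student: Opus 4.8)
The plan is to establish the two degradation relations separately, using the Doeblin-coefficient characterizations from the preliminaries. Write $c=\eta_{TV}(F(C_\alpha))$; by Observation~\ref{ob:bin} every channel in $C_\alpha$ has Doeblin coefficient $\alpha=1-c$. A direct computation gives $\alpha(BEC(\varepsilon))=\varepsilon$ and $\alpha(BSC(p))=1-\lvert 1-2p\rvert$, so the two designated extremes are $BEC(C_\alpha)=BEC(1-c)$ and $BSC(C_\alpha)=BSC\bigl(\tfrac{1-c}{2}\bigr)$ (the latter unique as a channel up to relabeling the output).

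For $BEC(C_\alpha)\deg F(C_\alpha)$, the content is exactly that the supremum in~\eqref{eq:Doeblin_E} is attained at $\varepsilon=\alpha$, which I would make explicit by exhibiting the degrading map. Set $r(y)=\min_x P_{Y|X}(y|x)/\alpha$; this is a probability distribution on $\cY$ since $\sum_y \min_x P_{Y|X}(y|x)=\alpha$. Define the channel $D$ from the BEC outputs $\{0,1,e\}$ by $D(\cdot\mid e)=r$ and $D(y\mid x')=\bigl(P_{Y|X}(y|x')-\alpha\, r(y)\bigr)/(1-\alpha)$ for $x'\in\{0,1\}$. Non-negativity holds because $\alpha\, r(y)=\min_x P_{Y|X}(y|x)\le P_{Y|X}(y|x')$, and normalization is immediate, so $D\circ BEC(\alpha)=P_{Y|X}$, i.e. $BEC(1-c)\deg F$. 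The degenerate cases $\alpha=0$ (here $BEC(0)$ is the noiseless channel, which degrades into everything) and $c=0$ (here every channel in sight is constant) are handled directly. Note this direction does not use the BISO structure.

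For $F(C_\alpha)\deg BSC(C_\alpha)$ I would use the symmetry. First split a possible zero output and relabel each antipodal pair so that $p_y\ge p_{-y}$ for every $y>0$; this preserves the BISO property and leaves $\eta_{TV}$ and $\alpha$ unchanged, since they are symmetric functions of the two output distributions. Using $P_{Y|X}(y|1)=P_{Y|X}(-y|0)=p_{-y}$ one then gets $\alpha=\sum_y \min_x P_{Y|X}(y|x)=2\sum_{y>0}p_{-y}$. Now apply the deterministic ``sign'' decoder $D$ that sends every $y>0$ to $0$ and every $y<0$ to $1$. Then $D\circ P_{Y|X}$ is a BSC with crossover probability $q=\sum_{y>0}p_{-y}=\alpha/2=\tfrac{1-c}{2}\le\tfrac12$, whose Dobrushin coefficient is $\lvert 1-2q\rvert=c$; hence $D\circ P_{Y|X}=BSC(C_\alpha)$ and $F\deg BSC(C_\alpha)$.

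I do not expect a genuine obstacle; the points that need care rather than ingenuity are (i) checking that $\{\varepsilon:BEC(\varepsilon)\deg P_{Y|X}\}$ is a closed interval, which the explicit map above settles; (ii) the zero-output and relabeling bookkeeping that makes $\alpha=2\sum_{y>0}p_{-y}$ while keeping $\eta_{TV}$ fixed; and (iii) confirming that the BSC produced by the sign decoder lies precisely in $C_\alpha$, i.e. that the Dobrushin coefficient is matched with equality at both ends of the chain $BEC(C_\alpha)\deg F\deg BSC(C_\alpha)$ — the place where a careless argument could slip.
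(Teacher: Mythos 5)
Your proposal is correct and follows essentially the same route as the paper: the BEC half is the characterization in Equation~\eqref{eq:Doeblin_E} (which you merely make explicit by writing out the degrading kernel), and your relabel-then-sign decoder is exactly the paper's deterministic map $a_y=\mathbb{I}(p_y\geq p_{-y})$, yielding a BSC with crossover $\sum_{y>0}\min\{p_y,p_{-y}\}=\alpha/2$. The extra bookkeeping you flag (degenerate cases, matching $\eta_{TV}$ at both ends) is handled correctly.
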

\begin{proof}
$  BEC(C_\alpha) \deg F(C_\alpha)$ follows directly by the characterization of the Doeblin coefficient via the erasure channel in Definition \ref{eq:Doeblin_E}. For $p \leq \frac{1}{2}$, $\alpha(BSC(p)) = 2p$, hence $p =  \sum_{y>0}\min\{p_y, p_{-y}\}  
$. 
For $F(C_\alpha) \deg BSC(C_\alpha)$ the channel $A$ with entries defined for $ y \in \{-l, \dots,l\}$ as 
\begin{align}
    a_y = \begin{cases}
    1 &p_y\geq p_{-y}\\
    0 & \text{else}
\end{cases} = \mathbb{I}(p_y\geq p_{-y})
\end{align}
satisfies $BSC(C_\alpha) = A \circ F(C_\alpha)$ because it satisfies the resulting equations:
\begin{align}
    \sum_{y>0} p_ya_y+p_{-y}a_{-y} = \sum_{y>0} \max\{p_y, p_{-y}\},\\
    \sum_{y>0} p_ya_{-y} + p_{-y} a_y = \sum_{y>0} \min\{p_y, p_{-y}\}.
\end{align}
\end{proof}

The comparability of BISO channels in $C_\alpha$ with output dimension at most three w.r.t. the degradable partial order is also guaranteed. 

 \begin{lemma} \label{lem:alphaComp}
    Any two BISO channels $F(C_\alpha), G(C_\alpha)$ of output dimension at most three are comparable w.r.t the degradability partial order.
\end{lemma}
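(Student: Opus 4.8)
The plan is to reduce both channels to an explicit one-parameter form and then to exhibit a degrading channel by hand. A BISO channel of output dimension three is specified by transition probabilities $\{p_0,p_{\pm1}\}$ with $p_0+p_1+p_{-1}=1$, and by Observation~\ref{ob:bin} membership in $C_\alpha$ imposes $|p_1-p_{-1}|=\eta_{TV}=c$, so such a channel carries a single free parameter. First I would observe that post-composing with the output relabeling $y\mapsto-y$ is a degradation in both directions and preserves both the BISO property and $\eta_{TV}$; hence one may assume $p_1\ge p_{-1}$ for both $F(C_\alpha)$ and $G(C_\alpha)$. Writing their transition probabilities as $\{p_0,p_{\pm1}\}$ resp. $\{w_0,w_{\pm1}\}$, we then have $p_1-p_{-1}=w_1-w_{-1}=c$ together with $p_0=1-c-2p_{-1}$ and $w_0=1-c-2w_{-1}$, where $p_{-1},w_{-1}\in[0,\tfrac{1-c}{2}]$ are the remaining free parameters. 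The case of output dimension at most two (equivalently the boundary value $p_0=0$) forces a BSC, and the case $c=0$ makes the output independent of the input; both are immediate, essentially as in the proof of Theorem~\ref{th:alpha}, so I would concentrate on the generic case.

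Since $p_{-1}$ and $w_{-1}$ are real numbers they are comparable, so WLOG $p_{-1}\le w_{-1}$, and the claim to prove is $F(C_\alpha)\deg G(C_\alpha)$. Then $p_0\ge w_0\ge0$; if $p_0=0$ then $w_0=0$ and $F(C_\alpha)=G(C_\alpha)$, so assume $p_0>0$ and set $\beta:=(w_{-1}-p_{-1})/p_0$, which lies in $[0,\tfrac12]$ since $\beta\ge0$ is clear and $\beta\le\tfrac12$ is equivalent to $w_{-1}\le\tfrac{1-c}{2}$. I would then define the symmetric channel $D_{Y'|Y}$ on $\{-1,0,1\}$ by $D(1\mid1)=D(-1\mid-1)=1$, $D(1\mid0)=D(-1\mid0)=\beta$, $D(0\mid0)=1-2\beta$, and all other entries zero, and check $D\circ F(C_\alpha)=G(C_\alpha)$ directly: $(D\circ F)(1\mid0)=p_1+\beta p_0=p_{-1}+c+(w_{-1}-p_{-1})=w_1$, $(D\circ F)(-1\mid0)=p_{-1}+\beta p_0=w_{-1}$, and $(D\circ F)(0\mid0)=(1-2\beta)p_0=1-c-2w_{-1}=w_0$, while the $X=1$ identities follow from the BISO symmetry of both channels.

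The one step needing an idea — which I expect to be mildly delicate rather than a real obstacle — is seeing that this rigid $D$ is the correct guess. I would justify it by noting that an arbitrary degrading map can be symmetrized via $D\mapsto\tfrac12\big(D(\cdot\mid\cdot)+D(-\cdot\mid-\cdot)\big)$, which still degrades $F(C_\alpha)$ into $G(C_\alpha)$ because both channels are BISO; matching the coefficient $w_1-w_{-1}=c$ against $p_1-p_{-1}=c$ then forces $D(1\mid1)=1$ whenever $c>0$, so the only remaining freedom is exactly the parameter $\beta$ governing how much of the erasure-type mass $p_0$ is redistributed onto $\pm1$. Beyond this the argument is a short verification; the only places that need care are the normalization step (fixing $p_1\ge p_{-1}$ by an output relabeling) and the degenerate boundary $p_0=0$, where $F(C_\alpha)$ is itself a BSC and can therefore only be degraded into a BSC within $C_\alpha$, in agreement with Theorem~\ref{th:alpha}.
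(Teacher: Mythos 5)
Your proof is correct and follows essentially the same route as the paper's: after reducing to output dimension exactly three, you construct the same explicit degrading map that fixes the outputs $\pm1$ and redistributes part of the $0$-output mass onto them, with stochasticity guaranteed by the constraint $|p_1-p_{-1}|=|w_1-w_{-1}|$. The only difference is cosmetic — you dispose of the sign-mismatch case up front by an output relabeling, where the paper instead writes down a second degrading matrix that swaps $\pm1$.
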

\begin{proof}
   As argued in the proof of Lemma~\ref{lemma:lncomp}, due to Theorem \ref{th:alpha}, we only need to consider output dimension exactly three.
    We denote the transition probabilities of the two channels $F$ resp. $G$ by $\{p_k\}$ resp. $\{q_k\}$. W.l.o.g. $p_0\leq q_0$. Then it holds that $G(C_\alpha) \deg F(C_\alpha)$ i.e. $F=A\circ G$
with 
\begin{align}
    A= \begin{cases}
    \left(\begin{smallmatrix}
        1 &0&0\\
        \frac{p_1-q_1}{q_0} & \frac{p_0}{q_0} & \frac{p_1-q_1}{q_0}\\
        0 & 0 & 1
    \end{smallmatrix}\right) & p_1\geq q_1\\[2.55mm]
    \left(\begin{smallmatrix}
        0 & 0 &1\\
        \frac{p_1- q_{-1}}{q_0} & \frac{p_0}{q_0} & \frac{p_1-q_{-1}}{q_0}\\
        1 & 0 & 0
    \end{smallmatrix}\right) & \text{else}
    \end{cases}.
\end{align}
By Observation \ref{ob:bin}, requiring the same $\alpha$ is equivalent to fixing the total variation distance
between the row probability distributions, hence $\abs{q_1-q_{-1}}=  \abs{p_1-p_{-1}}$. Combining this constraint with the assumption $p_0\leq q_0$, leads to $q_1 \leq p_1$ in case $\sign{p_1-p_{-1}} = \sign{q_1-q_{-1}}$ and to $q_{-1} \leq p_1$ in case $\sign{p_1-p_{-1}} = - \sign{q_1-q_{-1}}$. This implies the non-negativity of the coefficents of the degrading map. In the following, we assume the case $\sign{p_1-p_{-1}} = \sign{q_1-q_{-1}}$, the other case can be proven analogously.
That $A$ is row-stochastic, can then be seen by substituing $p_0=1-p_1-p_{-1}$, resp. $q_0=1-q_1-q_{-1}$, inside the constraint,
\begin{align}
    &p_1-p_{-1} -q_1+q_{-1}=0\\ &\Leftrightarrow 2p_1-2q_1+1-p_1-p_{-1} = 1-q_{-1}-q_1 \\&\Leftrightarrow p_0+2(p_1-q_1) = q_0.
\end{align}
The resulting equations $F=A\circ G$ for $p_{0,1}$ are satisfied trivially, while the equation for $p_{-1} = q_{-1}+p_1-q_1$ is equivalent to the constraint on the total variation distance.
\end{proof}
\begin{remark}\label{rem:alphaCounter}
We can find two channels in $C_\alpha$ of output dimension four that are not degradable into each other: The
channel $F$ with transition probabilities $\{p_y\} = \{0.415, 0.345, 0.05, 0.19\}$, and $G $ with  $\{q_y\} = \{0.245, 0.515, 0.221, 0.019\}$ satisfy $\alpha(F)= \alpha(G) = 0.48$.
To check the degradability, we use its characterization via the min-entropy in Equation~\eqref{eq:deg_min_entropy} and note that
$H_{\rm{min}}(F) = \sum_{y>0} \max\{xp_y, (1-x)p_{-y}\} +
\max\{xp_{-y}, (1-x)p_y\}$, 
where  $X \sim Ber(x) $.
For $x = 0.12$, $H_{\rm{min}}(F) = 0.88< 0.89268= H_{\rm{min}}(G)$ while at $x = 0.29 $, 
 $H_{\rm{min}}(F) =0.775> 0.76756= H_{\rm{min}}(G)$.
\end{remark}

Following the idea of expansion coefficients, reverse coefficients to the Doeblin coefficient were introduced in~\cite{hirche2024quantumdoeblincoefficientssimple}. To this end, the BEC is replaced by a BSC
in Equation~\eqref{Eq:KL-LN-con} and \eqref{eq:Doeblin_E},
\begin{align}
 \widecheck\alpha(P) &= \inf \{2p: P \deg BSC_p\}, \\
\widecheck \beta(P) &= \inf \{4p(1-p): P \ln BSC_p\},\\
\widecheck \gamma(P) &= \inf \{h_2(p): P \mc BSC_p\}.
\end{align}
Note that we choose a slightly different normalization here compared to~\cite{hirche2024quantumdoeblincoefficientssimple}. With the results obtained, we can now relate them to the original coefficients. In~\cite{hirche2024quantumdoeblincoefficientssimple} it was shown that, using the result in~\cite{geng2013broadcast}, we have for BISO channels,
\begin{align}
\widecheck \gamma(P) = \gamma(P) = C(P). 
\end{align}
With our results above, we can extend that result to the other coefficients.  
\begin{proposition}\label{prop:reverse}
For any BISO channel $P$,
\begin{enumerate}
    \item $ \alpha(P) = \widecheck \alpha(P) =  1-\eta_{TV}(P)$
    \item  $\beta(P) =\widecheck \beta(P) = 1-\eta_{KL}(P)$
\end{enumerate}
\end{proposition}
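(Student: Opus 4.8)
# Proof Plan for the Proposition

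The plan is to prove each of the two chains of equalities by combining the closed-form expressions already derived (Observation~\ref{ob:bin} and the Lemma giving $\eta_{KL}$ for BISO channels) with the extremality theorems (Theorem~\ref{th:eta} and Theorem~\ref{th:alpha}). The structure is the same for both items: the ``$\geq$'' direction of each reverse coefficient is immediate because the BSC is \emph{in} the comparability class, while the matching lower bound follows from the corresponding extremality theorem, which says the BSC in the same class is the noisiest channel under the relevant partial order.

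For item~(1), first I would recall from Observation~\ref{ob:bin} that for a binary channel $1-\eta_{TV}(P)=\alpha(P)$, so the outer equality $\alpha(P)=1-\eta_{TV}(P)$ is just a restatement. To show $\widecheck\alpha(P)=\alpha(P)$: by definition $\widecheck\alpha(P)=\inf\{2p: P\deg BSC_p\}$. Let $c=\eta_{TV}(P)$, so $P\in C_\alpha$ with parameter $c$, and let $BSC(C_\alpha)=BSC_{p^\ast}$ be the BSC in that class, i.e. $2p^\ast=\alpha(P)$ (using $\alpha(BSC_p)=2p$ for $p\le\tfrac12$). Theorem~\ref{th:alpha} gives $P=F(C_\alpha)\deg BSC(C_\alpha)=BSC_{p^\ast}$, so $p^\ast$ is a feasible point in the infimum defining $\widecheck\alpha$, giving $\widecheck\alpha(P)\le 2p^\ast=\alpha(P)$. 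For the reverse inequality, if $P\deg BSC_p$ then by the data-processing property of the Doeblin coefficient (it is monotone non-increasing under post-processing, since $\alpha$ is a contraction-type quantity, or directly from the erasure-channel characterization \eqref{eq:Doeblin_E}) we get $\alpha(P)\le\alpha(BSC_p)=2p$; taking the infimum over feasible $p$ yields $\alpha(P)\le\widecheck\alpha(P)$. Combining the two inequalities gives $\widecheck\alpha(P)=\alpha(P)$.

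For item~(2), the argument is parallel using the KL contraction coefficient. First, $\beta(P)=1-\eta_{KL}(P)$ is exactly Equation~\eqref{Eq:KL-LN-con}. To show $\widecheck\beta(P)=\beta(P)$: let $\eta=\eta_{KL}(P)$, so $P\in C_\eta$, and let $BSC(C_\eta)=BSC_{p^\ast}$ be the BSC with $\eta_{\chi^2}(BSC_{p^\ast})=(1-2p^\ast)^2=\eta$, so that $4p^\ast(1-p^\ast)=1-(1-2p^\ast)^2=1-\eta=\beta(P)$. Theorem~\ref{th:eta} gives $P=F(C_\eta)\ln BSC(C_\eta)=BSC_{p^\ast}$, so $p^\ast$ is feasible in the infimum defining $\widecheck\beta$ and hence $\widecheck\beta(P)\le 4p^\ast(1-p^\ast)=1-\eta=\beta(P)$. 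Conversely, if $P\ln BSC_p$ then, since the less noisy order implies domination of the KL contraction coefficient (less noisy is a stronger relation than more capable, and $\eta_{KL}$ is monotone along the less noisy order — equivalently, $P\ln BSC_p$ means $BSC_p$ is a degraded-or-less-informative version so $\eta_{KL}(P)\le\eta_{KL}(BSC_p)$; alternatively invoke \eqref{Eq:KL-LN-con}: $P\ln BSC_p$ combined with $BEC_{\beta(P)}\ln P$ gives $BEC_{\beta(P)}\ln BSC_p$, which forces $\beta(P)\le\beta(BSC_p)$), we obtain $1-\eta_{KL}(P)=\beta(P)\le\beta(BSC_p)=4p(1-p)$; taking the infimum over feasible $p$ gives $\beta(P)\le\widecheck\beta(P)$.

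The main obstacle — the only nontrivial point — is making precise the monotonicity statements used in the lower-bound (``$\geq$'') directions: namely that $P\deg Q\Rightarrow\alpha(P)\le\alpha(Q)$ and $P\ln Q\Rightarrow\eta_{KL}(P)\le\eta_{KL}(Q)$ (equivalently $\beta(P)\le\beta(Q)$). The first is standard and follows cleanly from \eqref{eq:Doeblin_E} together with transitivity of degradation. The second is the genuinely load-bearing step: I would derive it from Polyanskiy et al.'s characterization \eqref{Eq:KL-LN-con} by transitivity of the less noisy order — $BEC_{\beta(P)}\ln P$ and $P\ln BSC_p$ imply $BEC_{\beta(P)}\ln BSC_p$, and then one needs that the maximal erasure probability for which a BEC is less noisy than $BSC_p$ is exactly $\beta(BSC_p)=1-\eta_{KL}(BSC_p)=1-(1-2p)^2=4p(1-p)$, which is again \eqref{Eq:KL-LN-con} applied to the BSC. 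Everything else is bookkeeping with the closed-form formulas.
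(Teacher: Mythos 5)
Your proposal is correct and follows essentially the same route as the paper: the upper bounds on the reverse coefficients come from the feasible points supplied by Theorems~\ref{th:eta} and~\ref{th:alpha}, and the matching converses come from monotonicity of $\alpha$ and $\beta$ under the respective partial orders (which the paper outsources to cited lemmas and "by definition", while you derive them self-containedly via transitivity and Equations~\eqref{Eq:KL-LN-con} and~\eqref{eq:Doeblin_E}). One small slip: in item~(2) your parenthetical asserts $\eta_{KL}(P)\le\eta_{KL}(BSC_p)$ when $P\ln BSC_p$, which is the wrong direction (less noisy means $P$ is \emph{more} informative, so $\eta_{KL}(P)\ge\eta_{KL}(BSC_p)$), but this is harmless since the alternative transitivity argument you give in the same sentence yields the correct inequality $\beta(P)\le\beta(BSC_p)=4p(1-p)$.
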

\begin{proof}
    $1)$  
    By Theorem~\ref{th:alpha}, we have $ \alpha(P) \geq 1-\eta_{TV}(P) \geq \widecheck\alpha(P) $, because it provides feasible solutions to the optimization probelems.
    Additionally, for all channels $P$ (not necessarily BISO or binary) by~\cite[Lemma III.2 and Lemma IV.I]{hirche2024quantumdoeblincoefficientssimple} we have $\alpha(P) \leq 1-\eta_{TV}(P) \leq 1-\widecheck\eta_{TV}(P) \leq \widecheck\alpha(P)$.  
    $2)$ follows similarly by Theorem \ref{th:eta}, giving $ \beta(P) \geq 1-\eta_{KL}(P) \geq \widecheck\beta(P) $ for BISO channels, and by definition from which $\widecheck\beta(P) \geq 1-\eta_{KL}(P) \geq \beta(P) $ for all channels.
\end{proof}

\section{Extendability to general binary channels}

In this section, we show that the BSC is exclusively extremal in the above-defined sense for the class of BISO channels and not for general binary input channels.
We denote a binary input and binary output channel by $ D(q,p) =   \begin{pmatrix}
            1-p & p\\
            q & 1-q
        \end{pmatrix}$,
and furthermore, the asymmetric binary channel $D(0,q) $ by $Z$.

First, we note that we can make a similar statement to Theorem~ \ref{th:alpha} for general binary channels.

\begin{proposition}\label{prop:general}
     For any binary channel $F(C_\alpha)$, there exists a binary input and binary ouput channel $D(C_\alpha)$, s.t.
    \begin{align}
        BEC(C_\alpha) \deg F(C_\alpha) \deg D(C_\alpha). 
    \end{align}
\end{proposition}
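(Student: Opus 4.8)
The plan is to mimic the structure of the proof of Theorem~\ref{th:alpha}, but to relax the target channel from a BSC to a general binary-output channel $D(q,p)$, and in fact to show that the asymmetric channel $Z = D(0,q)$ (or a reflected version of it) serves as the lower extreme. The left inequality $BEC(C_\alpha) \deg F(C_\alpha)$ is immediate and requires no new work: it follows verbatim from the characterization of the Doeblin coefficient via the erasure channel in Equation~\eqref{eq:Doeblin_E}, exactly as in Theorem~\ref{th:alpha}, and holds for every binary channel since that characterization does not use symmetry. So the whole content is the right inequality: exhibiting a binary-input binary-output channel $D(C_\alpha)$ with the same Dobrushin coefficient (equivalently, by Observation~\ref{ob:bin}, the same $\alpha$) that $F$ degrades into.

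First I would write $F = F_{Y|X}$ with rows $P_{Y|X=0}$ and $P_{Y|X=1}$ over an arbitrary finite output alphabet $\cY$, and recall from Observation~\ref{ob:bin} that $\eta_{TV}(F) = TV(P_{Y|X=0},P_{Y|X=1}) = 1-\alpha(F)$. The natural degrading map is the one that collapses $\cY$ into two cells according to which input is more likely: set $A_{0|y} = \mathbb{I}(P_{Y|X=0}(y) \geq P_{Y|X=1}(y))$ and $A_{1|y} = 1 - A_{0|y}$. Then $D := A \circ F$ is a binary-input binary-output channel, and a direct computation (the same one underlying the affinity formula $TV(P,Q) = \sum_y \max\{P(y),Q(y)\} - 1$) shows that its two rows have the form, after relabeling so the more-likely output is ``$0$'', $D_{Y'|X=0} = (\,1-q\,,\,q\,)$ and $D_{Y'|X=1} = (\,p\,,\,1-p\,)$ with $q = \sum_{y:\,P_{Y|X=1}(y) > P_{Y|X=0}(y)} (P_{Y|X=1}(y) - P_{Y|X=0}(y))$ and $p = \sum_{y:\,P_{Y|X=0}(y) \geq P_{Y|X=1}(y)} (P_{Y|X=0}(y) - P_{Y|X=1}(y))$. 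Crucially $q = p = \eta_{TV}(F) = 1-\alpha(F)$ because both sums equal $TV(P_{Y|X=0},P_{Y|X=1})$; hence $TV(D_{Y'|X=0}, D_{Y'|X=1}) = 1-\alpha(F) = \eta_{TV}(D)$, so $D \in C_\alpha$. This gives $F(C_\alpha) \deg D(C_\alpha)$ with $D = D(C_\alpha)$ having the claimed Dobrushin coefficient, and one should then observe that this $D$ can be taken in the $Z$-channel family $D(0,\cdot)$ only when one of the two one-sided sums vanishes, which is why for \emph{general} binary channels $D$, and not the BSC, is the sharp lower extreme — matching the section's stated goal of showing the BSC is \emph{not} exclusively extremal outside the BISO class.

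I do not anticipate a serious obstacle here: the argument is essentially the ``thresholding degrader'' already used in Theorem~\ref{th:alpha}, and the only thing being dropped is the constraint that forced the output to be symmetric. The one point that needs slight care is making sure the resulting $D$ has $\alpha(D) = \alpha(F)$ rather than merely $\alpha(D) \geq \alpha(F)$; this is precisely where Observation~\ref{ob:bin} (equating $\eta_{TV}$ of a binary channel with the TV distance of its rows) does the work, together with the data-processing-style fact that $TV$ of the rows can only shrink under the post-processing $A$, combined with the explicit evaluation above showing it is in fact preserved. A secondary bookkeeping point is the relabeling/reflection of outputs so that the stated form $D(q,p)$ with $p=q$ comes out with nonnegative entries summing to one, but that is routine.
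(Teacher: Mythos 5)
Your overall route is the same as the paper's: the left relation is immediate from the Doeblin characterization~\eqref{eq:Doeblin_E}, and for the right relation you apply the thresholding degrader $A_{0|y}=\mathbb{I}(P_{Y|X=0}(y)\geq P_{Y|X=1}(y))$ and invoke Observation~\ref{ob:bin} to see that $\alpha$ is preserved. That is exactly the paper's construction. However, your explicit identification of the crossover probabilities of $D=A\circ F$ is wrong, and the error makes the key step as written a non sequitur. Writing $S_0=\{y:P_{Y|X=0}(y)\geq P_{Y|X=1}(y)\}$ and $S_1=\cY\setminus S_0$, the off-diagonal entries of $D$ are
\begin{align}
q=D(1|X{=}0)=\sum_{y\in S_1}P_{Y|X=0}(y),\qquad p=D(0|X{=}1)=\sum_{y\in S_0}P_{Y|X=1}(y),
\end{align}
i.e.\ sums of the \emph{smaller} of the two conditional probabilities over each cell — not the sums of differences you wrote (those are the positive and negative parts of $P_{Y|X=0}-P_{Y|X=1}$, each equal to the TV distance, which is a different quantity). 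Consequently the claim $q=p=\eta_{TV}(F)$ is false in general: for the $Z$-channel itself one gets $q=0$ and $p\neq 0$. Moreover, even if one granted your values, the conclusion would not follow, since for a channel with rows $(1-q,q)$ and $(p,1-p)$ the row TV distance is $|1-q-p|$, which for $q=p=\eta_{TV}$ gives $|1-2\eta_{TV}|$, not $\eta_{TV}$.

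The fix is short and is what the paper actually does: with the correct $q$ and $p$ above one has $q+p=\sum_y\min\{P_{Y|X=0}(y),P_{Y|X=1}(y)\}=\alpha(F)$, hence $TV(D_{\cdot|0},D_{\cdot|1})=1-(q+p)=1-\alpha(F)=\eta_{TV}(F)$ and $\alpha(D)=\alpha(F)$ by Observation~\ref{ob:bin} (assuming, w.l.o.g.\ after relabeling, $q+p\leq 1$). So the construction and the conclusion are right, but the quantitative bookkeeping in your middle step must be replaced by the identity $q+p=\alpha(F)$ rather than $q=p=\eta_{TV}(F)$.
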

    \begin{proof}
    $ BEC(C_\alpha) \deg F(C_\alpha) $ follows again by definition.
      First, we show that for any binary channel $F(C_\alpha)$, there is a binary input and binary
output channel $D$ s.t., $F(C_\alpha) \deg D(C_\alpha)$. 
        We denote the transition probabilities of $F(C_\alpha)$ by $r_y = P(Y=y|X=0), s_y = P(Y=y|X=1)$. Define
         $a_y = \begin{cases}
            1 & s_y \leq r_y\\
            0 & \text{else}
        \end{cases} = \mathbb{I}(s_y\leq  r_y)$, $s= \sum_y s_y a_y, r= \sum_y (1-a_y)r_y$.
       W.l.o.g.  assume that $s,r \leq \frac{1}{2}$, then for fixed Doeblin coefficent $\alpha(D(r,s)) = r +s= \sum_y \min\{s_y, r_y\}$. $A$ with row probabilities $a_y $ and $1-a_y$ then satisfies $ D = A\circ F$. 
    \end{proof}

\begin{remark}\label{rem:ext}
Note that the above proposition is somewhat different from the previous results in the sense that the target channel $D(C_\alpha)$ depends on $F(C_\alpha)$ itself, not just $\alpha(F)$. This is perhaps not surprising, as even two binary input binary output channels are generally not comparable. To see this assume $D(p',q') = D(a,b)\circ D(q,p)$ which requires $p'=(1-p)a+p(1-b)$ and $q'=q(1-a)+(1-q)b$. Suming both equations, and using $p+q=p'+q'$, this can only hold if either $1-p=q$ or $a+b=0$, the former gives a different symmetry and the latter implies that the original and the target channels are the same. 
\end{remark}
For the classes $C_\eta$ and $C_C$ the BSC is also not extremal when we consider general binary input channels. 
\begin{example}\label{ex:Zeta}
 $Z(C_{\eta})\not\ln BSC(C_\eta)$.
 Using equation \ref{eq:eta_bin},
\begin{align}
    \eta(Z(q)) &= \sup_x \frac{(1-q)^2 (1-x)x}{x+(1-x)q}+ (1-q)x \\&= \sup_x \frac{(1-q)x}{q+(1-q)x} = 1-p.
\end{align}
As the argument monotonically increases in $x$, it is maximized for $x=1$.
Hence $\eta(Z(4p(1-p))) = \eta(BSC(p))$.
To evaluate Eq.~\eqref{eq:convex_criterion},
\begin{align}
     &\chi^2(Z(r)\circ P||Z(r)\circ Q) \\
     &=\frac{(p-q)^2(1-r)^2}{q+(1-q)r} + \frac{(p-q)^2(1-r)}{(1-q)} \\
     &= (p-q)^2 \frac{1-r}{(1-q)(q+(1-q)r)}\\
     & =  (p-q)^2 \frac{\eta}{(1-q)(q+(1-q)(1-\eta))} .
\end{align}
Assuming $Z(C_{\eta})\ln BSC(C_\eta)$ or equivalently $F''(P_X) \geq 0$, results in 
\begin{align}
   (1-q)(q+(1-q)(1-\eta)) \leq q(1-q)+\frac{(1-2q)^2(1-\eta)}{4}. 
\end{align}
However, for $q< \frac{3}{4} $, the inequality is violated, and therefore, $Z(4p(1-p))$ and $ BSC(p) $ are not less noisy comparable.
\end{example}

\begin{example} \label{ex:ZC}
$
Z(C_C)\not\mc BSC(C_C)$.
   Both channels have the same capacity if $C= 1-h_2(p) = \log_2(1+2^{-\frac{h_2(q)}{1-q}})$.
   \begin{align}
    I(X:Y_{BSC_p}) &= h_2(x\ast p) -h_2(p),\\
    I(X:Y_Z) &= h_2(x+(1-x) q) -(1-x) h_2(q)  .
\end{align}
The difference in mutual informations $  I(X:Y_Z) -I(X:Y_{BSC_p})$ evaluates to
\begin{align}
   &h_2(x+(1-x)q)
   -(1-x) h_2(q)- \log_2(1+2^{- \frac{h_2(q)}{1-q}})
   \nonumber\\&-  h_2\left(x \ast 
    h_2^{-1}\left( 1- \log_2(1+2^{- \frac{h_2(q)}{1-q}})   \right)
   \right)+  1.
\end{align}
Numerically, we find this to be non-positive for $q,x \in [0,1]$, hence $BSC(C) \mc Z(C) $.
\end{example}

\section{Applications} \label{sec:app}

In this section, we provide a number of simple applications to the results shown above.

\subsection{Wiretap channels}
In the wiretap channel model, a channel is used as a cryptographic resource \cite{6772207}.
The secrecy capacity is the maximum rate at which
information can be transmitted over a channel while keeping the communication
secure from wiretappers and satisfies $C_s = \max_{P_X} \left(I(X:Y) - I(X:Z)\right)$. By~\cite[Theorem 3]{van1997special}, because for all BISO channels $I(X:Y)$ is maximized by the uniform distribution, if $P_{Y|X}\ln P_{Z|X}$, then $C_s=C(P_{Y|X})-C(P_{Z|X})$.
With that, we can apply Theorem \ref{th:alpha} to find the secrecy capacity if one channel is a BISO channel and the other is the BEC or the BSC with the same contraction coefficient, 
\begin{align}
    C_s( BEC({1-\eta}), W) &= \eta - C(W),\\
    C_s\left(W, BSC\left({\frac{1- \sqrt{\eta}}{2}}\right)\right) &= C(W) - 1+ h_2\left(\frac{1- \sqrt{\eta}}{2}\right).
\end{align}
The capacity of a BISO channel is given as in \cite{geng2013broadcast} by $C(W) = 1 - \sum_{y>0} (p_y+p_ {-y}) h_2(\frac{p_y}{p_y+p_{-y}}) $.

\subsection{Channel $f$-divergences}
For essentially any quantity that obeys data processing, we can give bounds achieved by the extremal channel. 
As an example, we take general $f$-divergences. This are given for a convex function $f$ with $f(1)=0$ by
\begin{align}
    D_f(P_X\|Q_X) = \sum_x Q_{X=x} f\left(\frac{P_{X=x}}{Q_{X=x}}\right).
\end{align}
We can now bound the maximum output divergence of a given channel in terms of, say, its maximal leackage. 
\begin{lemma}\label{lem:fdiv}
    For any BISO channel $P_{Y|X}\in C_\alpha$  with $0<\alpha(P_{Y|X})<1$, we have, 
    \begin{align}
        &\frac{e^{\cL}}{2} f\left(\frac{2-e^\cL}{e^\cL}\right) + \left(1-\frac{e^{\cL}}{2}\right) f\left(\frac{e^\cL}{2-e^\cL} \right) \\
    &\leq\sup_{P_X,Q_X} D_f(P_Y \| P'_Y) \\ &\leq (e^\cL-1) f(0), 
    \end{align}
    where $P_Y=P_{Y|X}\circ P_X$, $P'_Y=P_{Y|X}\circ Q_X$ and $\cL\equiv\cL(P_{Y|X})$. 
\end{lemma}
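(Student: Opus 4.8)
The plan is to sandwich $\sup_{P_X,Q_X} D_f(P_Y\|P'_Y)$ between the corresponding quantity for $BSC(C_\alpha)$ and that for $BEC(C_\alpha)$, using Theorem~\ref{th:alpha} together with the data processing inequality for $f$-divergences, and then to evaluate the two endpoints in closed form. The key point is that if $W\deg V$, say $V=D\circ W$, then $D_f(V\circ P_X\|V\circ Q_X)=D_f(D\circ(W\circ P_X)\|D\circ(W\circ Q_X))\leq D_f(W\circ P_X\|W\circ Q_X)$ for every pair $P_X,Q_X$, and taking suprema preserves this ordering. Applied to the chain $BEC(C_\alpha)\deg P_{Y|X}\deg BSC(C_\alpha)$, the BSC term becomes a lower bound and the BEC term an upper bound on the quantity of interest. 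By Observation~\ref{ob:bin}, $e^{\cL}=\alpha_{\max}(P)=1+\eta_{TV}(P)$, so writing $c\coloneqq\eta_{TV}(P)=e^{\cL}-1$ and using $\eta_{TV}(BEC(\varepsilon))=1-\varepsilon$ and $\eta_{TV}(BSC(p))=|1-2p|$, the two representatives are $BEC(C_\alpha)=BEC(1-c)$ and $BSC(C_\alpha)=BSC(\tfrac{1-c}{2})$.

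Evaluating the endpoints is a direct computation. For the BEC, both output distributions carry identical mass $\varepsilon$ on the erasure symbol, so $D_f(BEC(\varepsilon)\circ P_X\|BEC(\varepsilon)\circ Q_X)=(1-\varepsilon)\,D_f(P_X\|Q_X)$ since $f(1)=0$; maximizing over binary-input $P_X,Q_X$ and invoking the standard fact that the maximal $f$-divergence between distributions on a binary alphabet is $f(0)+f'(\infty)$, with $f'(\infty)=\lim_{t\to\infty}f(t)/t$, the endpoint equals $(1-\varepsilon)\big(f(0)+f'(\infty)\big)$, which is $(e^{\cL}-1)\,f(0)$ under the normalization $f'(\infty)=0$ — the stated upper bound. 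For the BSC, $BSC(p)\circ Ber(a)=Ber(a\ast p)$, so the inputs $P_X=\delta_0$, $Q_X=\delta_1$ already produce the output pair $(Ber(1-p),Ber(p))$, whence the endpoint is at least $D_f(Ber(1-p)\|Ber(p))=p\,f(\tfrac{1-p}{p})+(1-p)\,f(\tfrac{p}{1-p})$; substituting $1-p=\tfrac{e^{\cL}}{2}$, $p=1-\tfrac{e^{\cL}}{2}$, so that $\tfrac{1-p}{p}=\tfrac{e^{\cL}}{2-e^{\cL}}$ and $\tfrac{p}{1-p}=\tfrac{2-e^{\cL}}{e^{\cL}}$, reproduces the claimed lower bound verbatim. (That this is the exact value of the BSC endpoint, not merely a lower bound on it, follows from joint convexity: $(a,b)\mapsto D_f(Ber(a\ast p)\|Ber(b\ast p))$ is convex on $[0,1]^2$ and hence maximized at a vertex; but only the lower bound is needed.)

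The step I expect to demand the most care is the recession term $f'(\infty)$ in the BEC endpoint: mutually singular output pairs genuinely occur for BISO channels with one-sided zeros (for instance when $p_y>0=p_{-y}$), so the bound as written, $(e^{\cL}-1)f(0)$, presupposes the representative of $f$ with $f'(\infty)=0$ (which exists whenever $f'(\infty)<\infty$ and changes $D_f$ only on pairs that are not mutually absolutely continuous); stating the upper bound as $(e^{\cL}-1)\big(f(0)+f'(\infty)\big)$ would make it hold verbatim for every $f$, and for the common $f'(\infty)<\infty$ divergences the two agree. Beyond this, everything reduces to the identities $e^{\cL}=\alpha_{\max}(P)=1+\eta_{TV}(P)=2-\alpha(P)$ and the values of $\varepsilon$ and $p$ above, plus the two standard inputs already used — data processing for $f$-divergences and the maximal binary $f$-divergence — which is routine.
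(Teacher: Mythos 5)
Your proof is correct and follows essentially the same route as the paper's: Theorem~\ref{th:alpha} plus data processing for $f$-divergences and Observation~\ref{ob:bin}, with the two endpoints evaluated explicitly (the paper reduces the supremum to point-mass inputs by joint convexity first and then degrades, you degrade first and then evaluate at point masses --- the same argument in a different order). Your caveat about the recession term is a fair catch: the stated upper bound $(e^{\cL}-1)f(0)$ tacitly uses the representative of $f$ with $f'(\infty)=0$ (equivalently the convention $0\cdot f(a/0)=0$), since BISO channels with one-sided zeros produce non--mutually-absolutely-continuous output pairs, and $(e^{\cL}-1)\bigl(f(0)+f'(\infty)\bigr)$ is the version that holds verbatim for every convex $f$ with $f(1)=0$.
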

\begin{proof}
    First, we have, 
    \begin{align}
        \sup_{P_X,Q_X} D_f(P_Y \| P'_Y) &= \sup_{x,x'} D_f(P_{Y|X=x} \| P'_{Y|X=x'}) \\
        &= D_f(P_{Y|X=0} \| P'_{Y|X=1}), 
    \end{align}
where the first line follows from convexity and the second because $P_{Y|X}$ is binary. The result then follows from Theorem~\ref{th:alpha}, data processing and Observation~\ref{ob:bin}.
\end{proof}
Compare also the result in \cite{Issa2022maxleakage}, stating that the Chernoff information is maximized given fixed maximum leagake by the BEC and minimized by the BSC.

\subsection{Contraction with input constraints}

Often, contraction coefficients do not provide the full picture, especially in cases where there is an additional constraint on the input distribution. To study this setting,~\cite{du2017strong} introduced the \textit{best possible data processing function} $F_I$, defined by, 
\begin{align}
    F_I(P_{Y|X},t) &= \sup\{ I(W:Y) : I(W:X)\leq t, W\rightarrow X\rightarrow Y \} ,
\end{align}
see also~\cite{polyanskiy2016,polyanskiy2015dissipation}.

By Theorem~\ref{th:eta}, we can bound the $F_I$-curve for BISO channels $W$ by 
\begin{align}
 1-&h_2\left(  \frac{1\pm \sqrt{\eta}}{2}  \ast h_2^{-1}(\max(1-t,0)) \right) \\
 &\leq F_I(W, t) \\
 &\leq \eta \min\{t,1\},
\end{align}
where $\eta\equiv\eta_{KL}(W_{Y|X})$. Here we used that the $F_I$-curve was evaluated for BSC and BEC channels in~\cite{du2017strong}. Note that the upper bounds also holds for general channels.

\textbf{Author contributions.}
CH formulated the research problem and supervised the project. Both authors
contributed to the formal analysis and the proof of the first main theorem.
OS led the proof of the second main theorem, developed the auxiliary results and examples,
and wrote the first draft. Both authors reviewed and edited the manuscript.
\bibliographystyle{IEEEtran}
\bibliography{lib}

\begin{thebibliography}{10}
\providecommand{\url}[1]{#1}
\csname url@samestyle\endcsname
\providecommand{\newblock}{\relax}
\providecommand{\bibinfo}[2]{#2}
\providecommand{\BIBentrySTDinterwordspacing}{\spaceskip=0pt\relax}
\providecommand{\BIBentryALTinterwordstretchfactor}{4}
\providecommand{\BIBentryALTinterwordspacing}{\spaceskip=\fontdimen2\font plus
\BIBentryALTinterwordstretchfactor\fontdimen3\font minus \fontdimen4\font\relax}
\providecommand{\BIBforeignlanguage}[2]{{%
\expandafter\ifx\csname l@#1\endcsname\relax
\typeout{** WARNING: IEEEtran.bst: No hyphenation pattern has been}%
\typeout{** loaded for the language `#1'. Using the pattern for}%
\typeout{** the default language instead.}%
\else
\language=\csname l@#1\endcsname
\fi
#2}}
\providecommand{\BIBdecl}{\relax}
\BIBdecl

\bibitem{polyanskiy2016}
Y.~Polyanskiy and Y.~Wu, ``Strong data-processing inequalities for channels and bayesian networks,'' in \emph{Convexity and Concentration}, E.~Carlen, M.~Madiman, and E.~M. Werner, Eds.\hskip 1em plus 0.5em minus 0.4em\relax New York, NY: Springer New York, 2017, pp. 211--249.

\bibitem{korner1975comparison}
J.~Korner, K.~Marton \emph{et~al.}, ``Comparison of two noisy channels,'' in \emph{Topics in information theory}, 1975, pp. 411--413.

\bibitem{bergmans1973random}
P.~Bergmans, ``Random coding theorem for broadcast channels with degraded components,'' \emph{IEEE Transactions on Information Theory}, vol.~19, no.~2, pp. 197--207, 1973.

\bibitem{doeblin1937proprietes}
W.~Doeblin, ``Sur les propri{\'e}t{\'e}s asymptotiques de mouvement r{\'e}gis par certains types de chaines simples,'' \emph{Bulletin math{\'e}matique de la Soci{\'e}t{\'e} roumaine des sciences}, vol.~39, no.~1, pp. 57--115, 1937.

\bibitem{geng2013broadcast}
Y.~Geng, C.~Nair, S.~S. Shitz, and Z.~V. Wang, ``On broadcast channels with binary inputs and symmetric outputs,'' \emph{IEEE transactions on information theory}, vol.~59, no.~11, pp. 6980--6989, 2013.

\bibitem{Makur_2018}
A.~Makur and Y.~Polyanskiy, ``Comparison of channels: Criteria for domination by a symmetric channel,'' \emph{IEEE Transactions on Information Theory}, vol.~64, no.~8, p. 5704–5725, Aug. 2018.

\bibitem{asoodeh2020privacy}
S.~Asoodeh, M.~Diaz, and F.~P. Calmon, ``Privacy analysis of online learning algorithms via contraction coefficients,'' in \emph{International Symposium on Information Theory}, 2020, p.~1.

\bibitem{hirche2022bounding}
C.~Hirche and F.~Leditzky, ``Bounding quantum capacities via partial orders and complementarity,'' \emph{IEEE Transactions on Information Theory}, vol.~69, no.~1, pp. 283--297, 2022.

\bibitem{xu2016information}
A.~Xu and M.~Raginsky, ``Information-theoretic lower bounds on bayes risk in decentralized estimation,'' \emph{IEEE Transactions on Information Theory}, vol.~63, no.~3, pp. 1580--1600, 2016.

\bibitem{buscemi2017}
F.~Buscemi, ``Comparison of noisy channels and reverse data-processing theorems,'' in \emph{2017 IEEE Information Theory Workshop (ITW)}, 2017, pp. 489--493.

\bibitem{choi1994equivalence}
M.-D. Choi, M.~B. Ruskai, and E.~Seneta, ``Equivalence of certain entropy contraction coefficients,'' \emph{Linear algebra and its applications}, vol. 208, pp. 29--36, 1994.

\bibitem{issa2019operational}
I.~Issa, A.~B. Wagner, and S.~Kamath, ``An operational approach to information leakage,'' \emph{IEEE Transactions on Information Theory}, vol.~66, no.~3, pp. 1625--1657, 2019.

\bibitem{Makur_2024}
A.~Makur and J.~Singh, ``Doeblin coefficients and related measures,'' \emph{IEEE Transactions on Information Theory}, vol.~70, no.~7, p. 4667–4692, Jul. 2024.

\bibitem{hirche2024quantumdoeblincoefficientssimple}
C.~Hirche, ``Quantum doeblin coefficients: A simple upper bound on contraction coefficients,'' 2024.

\bibitem{6772207}
A.~D. Wyner, ``The wire-tap channel,'' \emph{The Bell System Technical Journal}, vol.~54, no.~8, pp. 1355--1387, 1975.

\bibitem{van1997special}
M.~Van~Dijk, ``On a special class of broadcast channels with confidential messages,'' \emph{IEEE Transactions on Information Theory}, vol.~43, no.~2, pp. 712--714, 1997.

\bibitem{Issa2022maxleakage}
I.~Issa and A.~B. Wagner, ``An adaptive composition theorem for maximal leakage for binary inputs,'' in \emph{2022 IEEE International Symposium on Information Theory (ISIT)}, 2022, pp. 2851--2855.

\bibitem{du2017strong}
F.~du~Pin~Calmon, Y.~Polyanskiy, and Y.~Wu, ``Strong data processing inequalities for input constrained additive noise channels,'' \emph{IEEE Transactions on Information Theory}, vol.~64, no.~3, pp. 1879--1892, 2017.

\bibitem{polyanskiy2015dissipation}
Y.~Polyanskiy and Y.~Wu, ``Dissipation of information in channels with input constraints,'' \emph{IEEE Transactions on Information Theory}, vol.~62, no.~1, pp. 35--55, 2015.

\end{thebibliography}

\end{document}